%
%
%

\documentclass[graybox]{svmult}


\usepackage{mathptmx}       
\usepackage{helvet}         
\usepackage{courier}        
\usepackage{type1cm}        
%
\usepackage{makeidx}         
\usepackage{graphicx}        
\usepackage{multicol}        
\usepackage[bottom]{footmisc}
\usepackage{amssymb}
\usepackage{amsfonts,amsmath}

\DeclareMathOperator{\RE}{Re} \DeclareMathOperator{\IM}{Im}


\makeindex             


\begin{document}

\title*{Discrete Dirac-K\"{a}hler and Hestenes equations }
\author{Volodymyr Sushch}
\institute{Volodymyr Sushch \at Koszalin University of Technology, Sniadeckich 2,
 75-453 Koszalin, Poland, \email{volodymyr.sushch@tu.koszalin.pl}}

%
%
\maketitle

\abstract*{A discrete analogue of the Dirac equation in the Hestenes form is constructed  by introduction the Clifford product on the space of discrete forms.
We discuss the relation between the discrete   Dirac-K\"{a}hler  equation and a discrete Hestenes equation.}

\abstract {A discrete analogue of the Dirac equation in the Hestenes form is constructed  by introduction the Clifford product on the space of discrete forms.
We discuss the relation between the discrete   Dirac-K\"{a}hler  equation and a discrete Hestenes equation.}

\section{Introduction}
\label{sec:1}
The purpose of this  paper is to discuss the relation between the discrete   Dirac-K\"{a}hler  equation which was constructed in \cite{S1, S2}, and a discrete analogue of the Hestenes equation. We show that the geometric discretization scheme as developed in \cite{S2} can be used to find a new discrete formulation of the Dirac equation for a free electron in the Hestenes form.

We first briefly review some definitions and basic notation on the Dirac-K\"{a}hler equation \cite{Kahler, Rabin}.
Let $M={\mathbb R}^{1,3}$ be  Minkowski space with  metric signature  $(+,-,-,-)$.
Denote by $\Lambda^r(M)$ the vector space of smooth differential $r$-forms, $r=0,1,2,3,4$. We consider  $\Lambda^r(M)$ over $\mathbb{C}$.
Let $d:\Lambda^r(M)\rightarrow\Lambda^{r+1}(M)$ be the exterior differential and let $\delta:\Lambda^r(M)\rightarrow\Lambda^{r-1}(M)$ be the formal adjoint of $d$  with respect to  the natural inner product in $\Lambda^r(M)$ (codifferential). We have $\delta=\ast d\ast$, where  $\ast$ is the Hodge star operator  $\ast:\Lambda^r(M)\rightarrow\Lambda^{4-r}(M)$ with respect to the Lorentz metric.
 Denote by $\Lambda(M)$ the set of all differential forms on $M$. We have
\begin{equation*}
\Lambda(M)=\Lambda^0(M)\oplus\Lambda^1(M)\oplus\Lambda^2(M)\oplus\Lambda^3(M)\oplus\Lambda^4(M)=\Lambda^{ev}(M)\oplus\Lambda^{od}(M),
\end{equation*}
where $\Lambda^{ev}(M)=\Lambda^0(M)\oplus\Lambda^2(M)\oplus\Lambda^4(M)$  and  $\Lambda^{od}(M)=\Lambda^1(M)\oplus\Lambda^3(M)$.

Let $\Omega\in\Lambda(M)$
be an inhomogeneous differential form, then
$\Omega=\sum_{r=0}^4\overset{r}{\omega},$
where $\overset{r}{\omega}\in\Lambda^r(M)$.
Denote by $\Omega^{ev}$ and by $\Omega^{od}$  the even and odd parts of $\Omega$, i.e. $\Omega^{ev}=\overset{0}{\omega}+\overset{2}{\omega}+\overset{4}{\omega}$ and   $\Omega^{od}=\overset{1}{\omega}+\overset{3}{\omega}$.
 The Dirac-K\"{a}hler equation is given by
\begin{equation}\label{eq:01}
i(d+\delta)\Omega=m\Omega,
\end{equation}
where $i$ is the usual complex unit ($i^2=-1$) and  $m$  is a mass parameter.
It is easy to show that Eq.~(\ref{eq:01}) is equivalent to the set of equations
\begin{eqnarray*}\label{}
i(d+\delta)\Omega^{od}=m\Omega^{ev}, \qquad
i(d+\delta)\Omega^{ev}=m\Omega^{od}.
\end{eqnarray*}
The operator $d+\delta$ is the analogue of the gradient operator in Minkowski space-time $\nabla=\sum_{\mu=0}^3\gamma_\mu\partial^\mu$, $\mu=0,1,2,3$,  where $\gamma_\mu$ is the Dirac gamma matrix. Think of $\{\gamma_0, \gamma_1, \gamma_2, \gamma_3\}$ as a vector basis in space-time. Then the  gamma matrices $\gamma_\mu$ can be considered as generators of the Clifford algebra of space-time $\emph{C}\ell(1,3)$  \cite{B1}. The complex Clifford algebra $\emph{C}\ell(1,3)$ is a complex 16-dimensional vector space. It is known that an inhomogeneous form $\Omega$ can be represented as element of $\emph{C}\ell(1,3)$ over the complex field $\mathbb{C}$. Then the Dirac-K\"{a}hler equation can be written as an algebraic equation in $\emph{C}\ell(1,3)$ over $\mathbb{C}$
\begin{equation}\label{eq:02}
 i\nabla\Omega=m\Omega, \quad \Omega\in\emph{C}\ell(1,3).
 \end{equation}
  Eq.~(\ref{eq:02}) is equivalent to the four Dirac equations (traditional column-spinor equations) for a free electron.  Let $\emph{C}\ell^{ev}(1,3)$ be the even subalgebra of the real algebra $\emph{C}\ell(1,3)$. The equation
 \begin{equation}\label{eq:03}
 -\nabla\Omega\gamma_1\gamma_2=m\Omega\gamma_0, \quad \Omega\in\emph{C}\ell^{ev}(1,3)
 \end{equation}
is called the Hestenes form of the Dirac equation \cite{H1, H2}. The Hestenes equation is equivalent to the Dirac equation \cite{H1, Marchuk}. Suppose that for exterior forms (elements of $\Lambda(M)$) the Clifford multiplication is defined. In this case the basis covectors $e^\mu=dx^\mu$, $\mu=0,1,2,3$, of space-time are considered as generators of the Clifford algebra. The resulting algebra $\Lambda(M)$ with two multiplications is called the Grassmann-Clifford bialgebra \cite{Marchuk}. Thus Eq~(\ref{eq:03}) can be rewritten in terms of inhomogeneous forms as
\begin{equation}\label{eq:04}
 -(d+\delta)\Omega e^1e^2=m\Omega e^0,
 \end{equation}
 where $\Omega\in\Lambda^{ev}(M)$ is a real-valued  form.

 In this paper we construct a discrete analogue of the Hestenes equation (\ref{eq:04}) by introduction the Clifford product on the space of discrete forms.
In much the same way as in the continuum case \cite{B} it is shown that a solution of  the discrete   Dirac-K\"{a}hler equation gives rise to four independent solutions of  the discrete Hestenes equation.
Note that the discrete model is expressed clearly in terms of difference equations.
\section{Discrete Dirac-K\"{a}hler equation}
\label{sec:2}
We use a  discretization scheme  based on the language of differential forms and the double complex construction which is described in our preceding paper \cite{S2}.
Due to space limitations this paper does not include the relevant material from  \cite{S2}. We refer the reader to  \cite{S1, S2} for full mathematical details of the approach. This approach was originated by Dezin \cite{Dezin}.
Let $K(4)=K\otimes K\otimes K\otimes K$
be a cochain complex with  complex  coefficients,
where  $K$ is  the 1-dimensional complex generated by 0- and 1-dimensional basis elements   $x^{\kappa}$  and $e^{\kappa}$,  $\kappa\in\mathbb{Z}$,  respectively.
Then an arbitrary r-dimensional basis element of $K(4)$ can be written as
$s^k_{(r)}=s^{k_0}\otimes s^{k_1}\otimes s^{k_2}\otimes s^{k_3}$, where
$s^{k_\mu}$ is either $x^{k_\mu}$ or $e^{k_\mu}$,  $k=(k_0,k_1,k_2,k_3)$ and \ $k_\mu\in\mathbb{Z}$.
The dimension $r$ of a basis element $s^k_{(r)}$ is given
by the number of factors $e^{k_\mu}$ that appear in it. For example, the 1-dimensional basis elements
of $K(4)$ can be written as
\begin{eqnarray*}
e^k_0=e^{k_0}\otimes x^{k_1}\otimes x^{k_2}\otimes x^{k_3},  \qquad
e^k_1=x^{k_0}\otimes e^{k_1}\otimes x^{k_2}\otimes x^{k_3}, \\
e^k_2=x^{k_0}\otimes x^{k_1}\otimes e^{k_2}\otimes x^{k_3},  \qquad
e^k_3=x^{k_0}\otimes x^{k_1}\otimes x^{k_2}\otimes e^{k_3},
\end{eqnarray*}
where  the subscript $\mu=0,1,2,3$ indicates  a place of $e^{k_\mu}$ in $e^k_\mu$.
The complex $K(4)$ is a discrete analogue of $\Lambda(M)$. We will call cochains forms,
emphasizing their relationship with the corresponding continuum objects, differential
forms. Denote by  $K^r(4)$ the set of all $r$-forms. Then we have
\begin{equation*}
K(4)=K^0(4)\oplus K^1(4)\oplus K^2(4)\oplus K^3(4)\oplus K^4(4)=K^{ev}(4)\oplus K^{od}(4),
\end{equation*}
 where $K^{ev}(4)=K^0(4)\oplus K^2(4)\oplus K^4(4)$ and $K^{od}(4)=K^1(4)\oplus K^3(4)$.
 Any $r$-form $\overset{r}{\omega}\in K^r(4)$ can be expressed as
\begin{eqnarray}\label{eq:05}
\overset{0}{\omega}=\sum_k\overset{0}{\omega}_kx^k,  \qquad  \overset{4}{\omega}=\sum_k\overset{4}{\omega}_ke^k,
\end{eqnarray}
where $x^k=x^{k_0}\otimes x^{k_1}\otimes x^{k_2}\otimes x^{k_3}$  and  $e^k=e^{k_0}\otimes e^{k_1}\otimes e^{k_2}\otimes e^{k_3}$, and
\begin{eqnarray}\label{eq:06}
\overset{1}{\omega}=\sum_k\sum_{\mu=0}^3\omega_k^\mu e_\mu^k, \qquad
\overset{2}{\omega}=\sum_k\sum_{\mu<\nu} \omega_k^{\mu\nu}e_{\mu\nu}^k, \qquad
\overset{3}{\omega}=\sum_k\sum_{\iota<\mu<\nu} \omega_k^{\iota\mu\nu}e_{\iota\mu\nu}^k,
\end{eqnarray}
where $e_\mu^k$, $e_{\mu\nu}^k$ and $e_{\iota\mu\nu}^k$ are 1-, 2- and 3-dimensional basic elements of $K(4)$.
The components $\overset{0}{\omega}_k, \ \overset{4}{\omega}_k, \  \omega_k^\mu, \ \omega_k^{\mu\nu}$ and $\omega_k^{\iota\mu\nu}$ are complex numbers.
A discrete inhomogeneous form $\Omega\in K(4)$  is defined to be
\begin{equation}\label{eq:07}
\Omega=\overset{0}{\omega}+\overset{1}{\omega}+\overset{2}{\omega}+\overset{3}{\omega}+\overset{4}{\omega}.
\end{equation}
Let $d^c: K^r(4)\rightarrow K^{r+1}(4)$ be a discrete analogue of the exterior derivative $d$ and let $\delta ^c: K^r(4)\rightarrow K^{r-1}(4)$ be a discrete analogue of the codifferential $\delta$. For definitions of these operators and other discrete operations (the $\cup$-multiplication, the discrete Hodge star and so on) we refer the reader to  \cite{S2}. In this paper  we give only the difference
expression for $d^c$ and  $\delta ^c$.
Let the difference operator $\Delta_\mu$ be defined by
\begin{equation}\label{eq:08}
\Delta_\mu\omega_k^{(r)}=\omega_{\tau_\mu k}^{(r)}-\omega_k^{(r)},
\end{equation}
where  $\omega_k^{(r)}\in\mathbb{C}$ is a component of $\overset{r}{\omega}\in K^r(4)$ and
$\tau_\mu$ is   the shift operator  which acts  as
$\tau_\mu k=(k_0,...k_\mu+1,...k_3), \   \mu=0,1,2,3.$
For forms (\ref{eq:05}),  (\ref{eq:06})  we have
\begin{eqnarray}\label{eq:09}
d^c\overset{0}{\omega}=\sum_k\sum_{\mu=0}^3(\Delta_\mu\overset{0}{\omega}_k)e_\mu^k,  \qquad d^c\overset{1}{\omega}=\sum_k\sum_{\mu<\nu}(\Delta_\mu\omega_k^\nu-\Delta_\nu\omega_k^\mu)e_{\mu\nu}^k,
\end{eqnarray}
\begin{eqnarray}\label{eq:10}
d^c\overset{2}{\omega}=\sum_k\big[(\Delta_0\omega_k^{12}-\Delta_1\omega_k^{02}+\Delta_2\omega_k^{01})e_{012}^k
+(\Delta_0\omega_k^{13}-\Delta_1\omega_k^{03}+\Delta_3\omega_k^{01})e_{013}^k \nonumber \\
+(\Delta_0\omega_k^{23}-\Delta_2\omega_k^{03}+\Delta_3\omega_k^{02})e_{023}^k
+(\Delta_1\omega_k^{23}-\Delta_2\omega_k^{13}+\Delta_3\omega_k^{12})e_{123}^k\big],
\end{eqnarray}
\begin{equation}\label{eq:11}
d^c\overset{3}{\omega}=\sum_k(\Delta_0\omega_k^{123}-\Delta_1\omega_k^{023}+\Delta_2\omega_k^{013}-\Delta_3\omega_k^{012})e^k, \qquad d^c\overset{4}{\omega}=0,
\end{equation}
\begin{equation}\label{eq:12}
\delta^c\overset{0}{\omega}=0, \qquad \delta^c\overset{1}{\omega}=\sum_k(\Delta_0\omega_k^{0}-\Delta_1\omega_k^{1}-\Delta_2\omega_k^{2}-\Delta_3\omega_k^{3})x^k,
\end{equation}
\begin{eqnarray}\label{eq:13} \nonumber
\delta^c\overset{2}{\omega}=\sum_k\big[(\Delta_1\omega_k^{01}+\Delta_2\omega_k^{02}+\Delta_3\omega_k^{03})e_{0}^k
+(\Delta_0\omega_k^{01}+\Delta_2\omega_k^{12}+\Delta_3\omega_k^{13})e_{1}^k\\
+(\Delta_0\omega_k^{02}-\Delta_1\omega_k^{12}+\Delta_3\omega_k^{23})e_{2}^k
+(\Delta_0\omega_k^{03}-\Delta_1\omega_k^{13}-\Delta_2\omega_k^{23})e_{3}^k\big],
\end{eqnarray}
\begin{eqnarray}\label{eq:14} \nonumber
\delta^c\overset{3}{\omega}=\sum_k\big[(-\Delta_2\omega_k^{012}-\Delta_3\omega_k^{013})e_{01}^k+
(\Delta_1\omega_k^{012}-\Delta_3\omega_k^{023})e_{02}^k\\ \nonumber
+(\Delta_1\omega_k^{013}+\Delta_2\omega_k^{023})e_{03}^k
+(\Delta_0\omega_k^{012}-\Delta_3\omega_k^{123})e_{12}^k\\
+(\Delta_0\omega_k^{013}+\Delta_2\omega_k^{123})e_{13}^k
+(\Delta_0\omega_k^{023}-\Delta_1\omega_k^{123})e_{23}^k\big],
\end{eqnarray}
\begin{eqnarray}\label{eq:15}
\delta^c\overset{4}{\omega}=\sum_k\big[(\Delta_3\overset{4}{\omega}_k)e_{012}^k-(\Delta_2\overset{4}{\omega}_k)e_{013}^k
+(\Delta_1\overset{4}{\omega}_k)e_{023}^k+(\Delta_0\overset{4}{\omega}_k)e_{123}^k\big].
\end{eqnarray}
Let $\Omega\in K(4)$ be given by (\ref{eq:07}). A discrete analogue of the Dirac-K\"{a}hler equation (\ref{eq:01}) can be defined as
 \begin{equation}\label{eq:16}
i(d^c+\delta^c)\Omega=m\Omega.
\end{equation}
We can write this equation more explicitly by separating its homogeneous components as
\begin{eqnarray}\label{eq:17} \nonumber
i\delta^c\overset{1}{\omega}=m\overset{0}{\omega}, \quad i(d^c\overset{1}{\omega}+\delta^c\overset{3}{\omega})=m\overset{2}{\omega}, \quad
id^c\overset{3}{\omega}=m\overset{4}{\omega},\\
i(d^c\overset{0}{\omega}+\delta^c\overset{2}{\omega})=m\overset{1}{\omega}, \qquad
i(d^c\overset{2}{\omega}+\delta^c\overset{4}{\omega})=m\overset{3}{\omega}.
\end{eqnarray}
Substituting (\ref{eq:09})--(\ref{eq:15})  into (\ref{eq:17}) one  obtains the set of 16 difference equations   \cite{S2}.
\section{Clifford multiplication in $K(4)$ and discrete Hestenes equation}
\label{sec:3}
Let us define  the Clifford multiplication in  $K(4)$ by the following rules:

\medskip

1) \  $x^ke^k_\mu=e^k_\mu x^k=e^k_\mu, \quad \mu=0,1,2,3$;

2) \  $e^k_\mu e^k_\nu+e^k_\nu e^k_\mu=2g_{\mu\nu}x^k$, where  $g_{\mu\nu}=diag(1,-1,-1,-1)$ is the metric tensor;

3) \ $e^k_{\mu_1}\cdots e^k_{\mu_s}=e^k_{\mu_1\cdots \mu_s}$ \ for \ $0\leq \mu_1<\cdots <\mu_s\leq 3$.

\medskip
Note that the multiplication is defined for the basis elements of $K(4)$ with the same multi-index $k=(k_0,k_1,k_2,k_3)$ supposing the product to be zero in all other cases.
The operation is linearly extended to arbitrary discrete forms. For example, for any  $\overset{1}{\omega}, \ \overset{1}{\varphi}\in K^1(4)$ we have
\begin{eqnarray*}
\overset{1}{\omega}\overset{1}{\varphi}=\big(\sum_k\sum_{\mu=0}^3\omega^\mu_ke_\mu^k\big)\big(\sum_k\sum_{\mu=0}^3\varphi^\mu_ke_\mu^k\big)=
\sum_k (\omega^0_k\varphi^0_k-\omega^1_k\varphi^1_k-\omega^2_k\varphi^2_k-\omega^3_k\varphi^3_k)x^k \\
+\sum_k [(\omega^0_k\varphi^1_k-\omega^1_k\varphi^0_k)e_{01}^k+(\omega^0_k\varphi^2_k-\omega^2_k\varphi^0_k)e_{02}^k+
(\omega^0_k\varphi^3_k-\omega^3_k\varphi^0_k)e_{03}^k\\
+(\omega^1_k\varphi^2_k-\omega^2_k\varphi^1_k)e_{12}^k+(\omega^1_k\varphi^3_k-\omega^3_k\varphi^1_k)e_{13}^k+(\omega^2_k\varphi^3_k-\omega^3_k\varphi^2_k)e_{23}^k].
\end{eqnarray*}
\begin{proposition}
For any inhomogeneous form $\Omega\in K(4)$ we have
\begin{equation}\label{eq:18}
(d^c+\delta^c)\Omega=\sum_{\mu=0}^3e_\mu\Delta_\mu\Omega,
\end{equation}
where
\begin{equation}\label{eq:19}
e_\mu=\sum_ke_\mu^k, \qquad \mu=0,1,2,3,
\end{equation}
and $\Delta_\mu$ is the difference operator which acts on each component of $\Omega$ by the rule~(\ref{eq:08}).
\end{proposition}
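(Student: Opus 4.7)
The plan is to verify the identity componentwise by reducing to basis elements via linearity. First I would observe that since the Clifford product $e^{k'}_\mu \cdot s^k_{(r)}$ vanishes whenever $k' \ne k$ by rules 1)--2), only the $k' = k$ term of $e_\mu = \sum_{k'} e^{k'}_\mu$ contributes to $e_\mu \Delta_\mu \Omega$ at a given site. Hence the identity reduces to checking, for each fixed $k$ and each Clifford basis element $e^k_{\mu_1 \cdots \mu_r}$, that $\sum_{\mu=0}^3 (\Delta_\mu \omega_k^{\mu_1 \cdots \mu_r})\, e^k_\mu \cdot e^k_{\mu_1 \cdots \mu_r}$ reproduces the correct combination of raising- and lowering-operator terms prescribed by (\ref{eq:09})--(\ref{eq:15}).

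The key computation is the Clifford product $e^k_\mu \cdot e^k_{\mu_1 \cdots \mu_r}$, which splits into two cases by rules 2) and 3). If $\mu \notin \{\mu_1, \ldots, \mu_r\}$, anti-commuting $e^k_\mu$ past the lower-indexed factors and reordering into canonical form yields $(-1)^j\, e^k_{\nu_1 \cdots \nu_{r+1}}$, where $j$ counts how many of the $\mu_i$ are smaller than $\mu$; this is the degree-raising contribution that should match $d^c$. If instead $\mu = \mu_j$, moving $e^k_\mu$ into position $j$ incurs $j-1$ sign flips, and the relation $e^k_\mu e^k_\mu = g_{\mu\mu} x^k$ together with rule 1) collapses the pair, producing $(-1)^{j-1} g_{\mu\mu}\, e^k_{\mu_1 \cdots \widehat{\mu_j} \cdots \mu_r}$; this is the degree-lowering contribution that should match $\delta^c$.

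With these two formulas in hand, I would iterate through the cases $r = 0, 1, 2, 3, 4$ and collect the coefficients of each output basis form, comparing term by term against (\ref{eq:09})--(\ref{eq:15}). For instance, the coefficient of $e^k_{012}$ in $\sum_\mu e_\mu \Delta_\mu \overset{2}{\omega}$ should equal $\Delta_0 \omega_k^{12} - \Delta_1 \omega_k^{02} + \Delta_2 \omega_k^{01}$ from (\ref{eq:10}), while the coefficient of $e^k_0$ in the same sum should equal $\Delta_1 \omega_k^{01} + \Delta_2 \omega_k^{02} + \Delta_3 \omega_k^{03}$ from (\ref{eq:13}); these exemplify the degree-raising and degree-lowering matches respectively.

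The principal obstacle is the sign bookkeeping. The alternating signs in (\ref{eq:10}), (\ref{eq:13}), (\ref{eq:14}) arise from two genuinely different sources --- the Lorentz signature $g_{\mu\mu} = \mathrm{diag}(1,-1,-1,-1)$ feeding into the degree-lowering terms, and the permutation factors $(-1)^{j-1}$ or $(-1)^j$ required to restore canonical ordering in the degree-raising terms --- and the identity holds only because these two sign conventions align. Each individual verification is routine, but there are on the order of sixteen coefficients to match, so the real work is organising the bookkeeping systematically rather than overcoming any conceptual difficulty.
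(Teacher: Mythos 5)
Your proposal is correct and follows essentially the same route as the paper: a direct componentwise verification that $\sum_\mu e_\mu\Delta_\mu\overset{r}{\omega}$, expanded via the Clifford rules, reproduces the explicit formulas (\ref{eq:09})--(\ref{eq:15}) for $d^c$ and $\delta^c$ grade by grade. Your general sign formulas for $e^k_\mu e^k_{\mu_1\cdots\mu_r}$ (the $(-1)^j$ degree-raising and $(-1)^{j-1}g_{\mu\mu}$ degree-lowering cases) are accurate and merely systematize the case-by-case computation the paper carries out for the even part.
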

\begin{proof}
We prove the claim only for the case of even forms. Similar calculations apply to the case of odd forms.
Let $\Omega^{ev}=\overset{0}{\omega}+\overset{2}{\omega}+\overset{4}{\omega}$ be the even part of $\Omega$. We have
\begin{eqnarray*}
\sum_{\mu=0}^3e_\mu\Delta_\mu\overset{0}{\omega}=\sum_k(\Delta_0\overset{0}{\omega}_ke_0^k+\Delta_1\overset{0}{\omega}_ke_1^k+\Delta_2\overset{0}{\omega}_ke_2^k+
\Delta_3\overset{0}{\omega}_ke_3^k),
\end{eqnarray*}
\begin{eqnarray*}
\sum_{\mu=0}^3e_\mu\Delta_\mu\overset{2}{\omega}=\sum_k[(\Delta_1\omega_k^{01}+\Delta_2\omega_k^{02}+\Delta_3\omega_k^{03})e_0^k+
(\Delta_0\omega_k^{01}+\Delta_2\omega_k^{12}+\Delta_3\omega_k^{13})e_1^k\\
+(\Delta_0\omega_k^{02}-\Delta_1\omega_k^{12}+\Delta_3\omega_k^{23})e_2^k+
(\Delta_0\omega_k^{03}-\Delta_1\omega_k^{13}-\Delta_2\omega_k^{23})e_3^k]\\
+\sum_k[(\Delta_0\omega_k^{12}-\Delta_1\omega_k^{02}+\Delta_2\omega_k^{01})e_{012}^k+
(\Delta_0\omega_k^{13}-\Delta_1\omega_k^{03}+\Delta_3\omega_k^{01})e_{013}^k\\
+(\Delta_0\omega_k^{23}-\Delta_2\omega_k^{03}+\Delta_3\omega_k^{02})e_{023}^k+(\Delta_1\omega_k^{23}-\Delta_2\omega_k^{13}+\Delta_3\omega_k^{12})e_{123}^k],
\end{eqnarray*}
\begin{eqnarray*}
\sum_{\mu=0}^3e_\mu\Delta_\mu\overset{4}{\omega}=\sum_k(\Delta_0\overset{4}{\omega}_ke_{123}^k+\Delta_1\overset{4}{\omega}_ke_{023}^k-\Delta_2\overset{4}{\omega}_ke_{013}^k+
\Delta_3\overset{4}{\omega}_ke_{012}^k).
\end{eqnarray*}
Summing  both sides of  the above and using (\ref{eq:09})--(\ref{eq:15}) we obtain
\begin{eqnarray*}
\sum_{\mu=0}^3e_\mu\Delta_\mu\Omega^{ev}=\sum_{\mu=0}^3e_\mu\Delta_\mu(\overset{0}{\omega}+\overset{2}{\omega}+\overset{4}{\omega})=
d^c(\overset{0}{\omega}+\overset{2}{\omega})+\delta^c(\overset{2}{\omega}+\overset{4}{\omega})=(d^c+\delta^c)\Omega^{ev}.
\end{eqnarray*}
\end{proof}
Thus the discrete Dirac-K\"{a}hler equation can be rewritten in the form
\begin{equation*}
i\sum_{i=0}^3e_\mu\Delta_\mu\Omega=m\Omega.
\end{equation*}
Let $\Omega\in K^{ev}(4)$ be a real-valued even inhomogeneous form.  A discrete analogue of the Hestenes equation (\ref{eq:04}) is defined by
\begin{equation}\label{eq:20}
-(d^c+\delta^c)\Omega e_1e_2=m\Omega e_0,
\end{equation}
where $e_0, e_1, e_2$ are given by  (\ref{eq:19}).
This equation  can be expressed in terms of difference equations.
Substituting (\ref{eq:09})--(\ref{eq:15}) into  (\ref{eq:20}), and by the rules 1)--3) we obtain
\begin{eqnarray*}\label{}
\Delta_0\omega_k^{12}-\Delta_1\omega_k^{02}+\Delta_2\omega_k^{01}+\Delta_3\overset{4}{\omega}_k=m\overset{0}{\omega}_k,\\
\Delta_2\overset{0}{\omega}_k+\Delta_0\omega_k^{02}-\Delta_1\omega_k^{12}+\Delta_3\omega_k^{23}=m\omega_k^{01},\\
-\Delta_1\overset{0}{\omega}_k-\Delta_0\omega_k^{01}-\Delta_2\omega_k^{12}-\Delta_3\omega_k^{13}=m\omega_k^{02},\\
-\Delta_1\omega_k^{23}+\Delta_2\omega_k^{13}-\Delta_3\omega_k^{12}-\Delta_0\overset{4}{\omega}_k=m\omega_k^{03},\\
-\Delta_0\overset{0}{\omega}_k-\Delta_1\omega_k^{01}-\Delta_2\omega_k^{02}-\Delta_3\omega_k^{03}=m\omega_k^{12},\\
-\Delta_0\omega_k^{23}+\Delta_2\omega_k^{03}-\Delta_3\omega_k^{02}-\Delta_1\overset{4}{\omega}_k=m\omega_k^{13},\\
\Delta_0\omega_k^{13}-\Delta_1\omega_k^{03}+\Delta_3\omega_k^{01}-\Delta_2\overset{4}{\omega}_k=m\omega_k^{23},\\
\Delta_3\overset{0}{\omega}_k+\Delta_0\omega_k^{03}-\Delta_1\omega_k^{13}-\Delta_2\omega_k^{23}=m\overset{4}{\omega}_k.
\end{eqnarray*}
Let us introduce  the following constant forms
\begin{equation}\label{eq:21}
P_{\pm 0}=\frac{1}{2}(x\pm e_0), \qquad  P_{\pm 12}=\frac{1}{2}(x\pm ie_1e_2),
\end{equation}
where  $x=\sum_kx^k$ is the unit 0-form,
 and $e_\mu$ is given by (\ref{eq:19}). Note that $x$ plays  a role of the unit element in $K(4)$.   It is easy to check that
 \begin{equation*}
(P_{\pm 0})^2=P_{\pm 0}P_{\pm 0}=P_{\pm 0}, \qquad  (P_{\pm 12})^2=P_{\pm 12}P_{\pm 12}=P_{\pm 12}.
\end{equation*}
 Hence, the forms $P_{\pm 0}$ and $P_{\pm 12}$ are projectors.
\begin{proposition}
The projectors $P_{\pm 0}$ and $P_{\pm 12}$ have the following properties
\begin{equation}\label{eq:22}
P_{\pm 0}P_{\pm 12}=P_{\pm 12}P_{\pm 0}, \quad e_0P_{\pm 0}=P_{\pm 0}e_0,  \quad e_1e_2P_{\pm 12}=P_{\pm 12}e_1e_2,
\end{equation}
\begin{equation}\label{eq:23}
P_{\pm 0}=\pm P_{\pm 0}e_0,  \qquad  P_{\pm 12}=\pm iP_{\pm 12}e_1e_2.
\end{equation}
\end{proposition}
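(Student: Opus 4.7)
The plan is to verify both propositions by direct manipulation using only rules~1)--3) of the Clifford multiplication and the fact that $x$ is the unit. Since the Clifford product vanishes between basis elements with distinct multi-indices, the sums defining $e_\mu = \sum_k e^k_\mu$ pair up term-by-term; consequently the defining relations of rule~2) promote directly to $e_\mu e_\nu + e_\nu e_\mu = 2g_{\mu\nu}\,x$. First I would extract the two key scalar identities I need: setting $\mu=\nu=0$ gives $e_0^2 = x$, and the anticommutation $e_1 e_2 = -e_2 e_1$ together with $e_1^2 = e_2^2 = -x$ yields $(e_1 e_2)^2 = -e_1^2 e_2^2 = -x$.

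For the identities in~(\ref{eq:23}), the calculation is a one-liner in each sign. For the $e_0$-projector,
\[
P_{\pm 0}\,e_0 = \tfrac{1}{2}(xe_0 \pm e_0^2) = \tfrac{1}{2}(e_0 \pm x) = \pm\,\tfrac{1}{2}(x \pm e_0) = \pm P_{\pm 0}.
\]
For the $e_1 e_2$-projector, using $(e_1 e_2)^2 = -x$ I compute $P_{\pm 12}\,e_1 e_2 = \tfrac{1}{2}(e_1 e_2 \mp i x)$, and then multiplication by $\pm i$ on the left reproduces $\tfrac{1}{2}(x \pm i e_1 e_2) = P_{\pm 12}$. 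The only thing requiring care here is bookkeeping of signs from the factor $i$; the upper/lower sign choices must be followed consistently.

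For the identities in~(\ref{eq:22}), the second and third are immediate: $e_0$ commutes with $x$ and with itself, so with $P_{\pm 0}$; similarly $e_1 e_2$ commutes with $P_{\pm 12}$. The first identity $P_{\pm 0} P_{\pm 12} = P_{\pm 12} P_{\pm 0}$ expands, after discarding commuting terms involving $x$, to the single nontrivial requirement that $e_0$ commute with $e_1 e_2$. This follows from rule~2) applied twice: $e_0 e_1 e_2 = -e_1 e_0 e_2 = e_1 e_2 e_0$, since $e_0$ anticommutes with each of $e_1$ and $e_2$, producing two sign flips.

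The main obstacle is nothing conceptual; it is only the discipline of tracking the $\pm$ and $i$ factors cleanly in the $P_{\pm 12}$ calculation so that the two sign branches come out correctly. Everything else reduces to the two algebraic facts $e_0^2 = x$ and $(e_1 e_2)^2 = -x$, together with the anticommutativity of distinct basis 1-forms supplied by rule~2).
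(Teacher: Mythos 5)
Your proof is correct and is exactly the "computation" the paper alludes to: everything reduces to $e_0^2=x$, $(e_1e_2)^2=-x$, and the anticommutation of $e_0$ with $e_1$ and $e_2$, all of which you derive properly from rules 1)--3) and the vanishing of cross-$k$ products. The sign bookkeeping in both branches of (\ref{eq:23}) checks out.
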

\begin{proof}
The proof is a computation.
 \end{proof}
Let
\begin{equation}\label{eq:24}
P_{++}=P_{+0}P_{+12}, \quad P_{+-}=P_{+0}P_{-12}, \quad P_{-+}=P_{-0}P_{+12}, \quad P_{--}=P_{-0}P_{-12}.
\end{equation}
It is obvious that (\ref{eq:24}) are projectors again.
\begin{proposition}
Any inhomogeneous form $\Omega\in K(4)$ decomposes into four parts
\begin{equation}\label{eq:25}
\Omega=\Omega P_{++}+\Omega P_{-+}+\Omega P_{+-}+\Omega P_{--}.
\end{equation}
\end{proposition}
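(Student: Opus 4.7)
The plan is to reduce the decomposition to a single algebraic identity: namely, that the four projectors sum to the unit form $x = \sum_k x^k$. Once this is established, multiplying $\Omega$ on the right by $x$ and using that $x$ acts as the multiplicative identity will immediately yield the stated decomposition.

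First I would observe directly from the definitions in (\ref{eq:21}) that the cross-sums telescope: $P_{+0} + P_{-0} = x$ and $P_{+12} + P_{-12} = x$, because the $\pm e_0$ and $\pm i e_1 e_2$ contributions cancel. Next, using the definitions in (\ref{eq:24}) together with bilinearity of the Clifford product (guaranteed by its linear extension from basis elements), I would expand
\begin{equation*}
P_{++} + P_{+-} + P_{-+} + P_{--} = (P_{+0} + P_{-0})(P_{+12} + P_{-12}) = x \cdot x.
\end{equation*}
Rule 1 of the multiplication, together with the convention that products of basis elements carrying different multi-indices vanish, shows that $x$ is a two-sided unit in $K(4)$, so $x \cdot x = x$.

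Finally, distributing $\Omega$ across the four-term sum on the right-hand side of (\ref{eq:25}) and applying $\Omega x = \Omega$ gives the claim. There is no genuine obstacle; the only point that merits a line of verification is that $x$ is indeed the unit element and that Clifford multiplication is distributive over the infinite sums appearing in $\Omega$, both of which reduce to checks on basis elements with matching multi-index $k$.
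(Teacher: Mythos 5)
Your proposal is correct and follows essentially the same route as the paper: the paper's first step, $\Omega=\Omega P_{+0}+\Omega P_{-0}=\Omega P_{+12}+\Omega P_{-12}$, is exactly your identity $P_{+0}+P_{-0}=P_{+12}+P_{-12}=x$ applied to $\Omega$, and the paper then nests the two decompositions where you instead expand the product $(P_{+0}+P_{-0})(P_{+12}+P_{-12})=x$. The two arguments differ only in bookkeeping, and your explicit remark that $x$ is the unit and that the product distributes over the sums is a reasonable (if routine) addition.
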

\begin{proof} By (\ref{eq:21}) $\Omega$ can be represented as
\begin{equation*}
\Omega=\Omega P_{+0}+\Omega P_{-0} \quad \mbox{or}  \quad \Omega=\Omega P_{+12}+\Omega P_{-12}.
\end{equation*}
This yields
\begin{equation*}
\Omega=(\Omega P_{+0}+\Omega P_{-0})P_{+12}+(\Omega P_{+0}+\Omega P_{-0})P_{-12}.
\end{equation*}
Hence, by (\ref{eq:24}) we obtain (\ref{eq:25}).
 \end{proof}
 Recall that the  Hestenes equation is defined on real-valued even  forms.
 First suppose that the discrete Hestenes equation (\ref{eq:20}) acts in $K(4)$, i.e. acts in the same space as  the discrete Dirac-K\"{a}hler equation.
 \begin{proposition}
Let $\Omega\in K(4)$ be a solution of the discrete Dirac-K\"{a}hler equation, then  $\Omega P_{++}$ and $\Omega P_{--}$ satisfy Eq.~(\ref{eq:20}) while
$\Omega P_{-+}$ and $\Omega P_{+-}$ satisfy the same equation but the sign of the right hand side changed to its opposite.
\end{proposition}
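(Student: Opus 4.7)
The plan is to start from the discrete Dirac-K\"ahler equation, rewrite its left-hand side using~(18) as $i\sum_{\mu=0}^{3}e_\mu\Delta_\mu\Omega=m\Omega$, right-multiply by the projector $P_{\epsilon\eta}$ (with $\epsilon,\eta\in\{+,-\}$), push the projector past the operator $d^c+\delta^c$, and finally use the algebraic identities (22)--(23) to convert the resulting equation into the Hestenes form~(20). The signs built into (23) are exactly what will produce the four cases, with matching signs for $P_{++},P_{--}$ and opposite signs for $P_{+-},P_{-+}$.

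The key technical ingredient will be a commutation lemma: for any form $P=\sum_k p^k$ whose $k$-indexed summand is obtained from a single fixed pattern independent of~$k$ (in particular the generators $e_\mu$, the projectors $P_{\pm 0}$, $P_{\pm 12}$, and their products), the difference operator $\Delta_\mu$ commutes with right multiplication by~$P$. This is because the Clifford multiplication in $K(4)$ is defined only for basis elements sharing the same multi-index~$k$, so $\Omega P$ is ``pointwise in $k$'': its $k$-th piece is $\Omega|_k$ times a fixed pattern, and $\Delta_\mu$ acts only on the scalar components. Combined with~(18), this yields $(d^c+\delta^c)(\Omega P_{\epsilon\eta})=\bigl((d^c+\delta^c)\Omega\bigr)\,P_{\epsilon\eta}$, so the Dirac-K\"ahler equation gives $(d^c+\delta^c)(\Omega P_{\epsilon\eta})=-im\,\Omega P_{\epsilon\eta}$.

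The second step is to substitute this into~(20). Right-multiplication by $-e_1e_2$ turns the LHS into $im\,\Omega P_{\epsilon\eta}e_1e_2$. From~(23), right-multiplying $P_{\pm 12}=\pm i P_{\pm 12}e_1e_2$ by $e_1e_2$ and using $(e_1e_2)^2=-x$ gives $P_{\pm 12}e_1e_2=\mp i P_{\pm 12}$, whence $P_{\epsilon\eta}e_1e_2=-\eta\,i\,P_{\epsilon\eta}$ and the LHS simplifies to $\eta\,m\,\Omega P_{\epsilon\eta}$. On the other side, the Clifford rules give $e_0e_1e_2=e_1e_2e_0$, so $e_0$ commutes with $P_{\pm 12}$; combining with $P_{\pm 0}e_0=\pm P_{\pm 0}$ from~(23) yields $\Omega P_{\epsilon\eta}e_0=\epsilon\,\Omega P_{\epsilon\eta}$, and the RHS of~(20) becomes $\epsilon\,m\,\Omega P_{\epsilon\eta}$. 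Comparing, the two sides agree when $\epsilon=\eta$ (the cases $P_{++},P_{--}$) and differ by a sign when $\epsilon\neq\eta$ (the cases $P_{+-},P_{-+}$), which is exactly the claim.

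The hard part will be the commutation lemma, since it is the only non-algebraic step and the one that relies on the specific locality-in-$k$ of the discrete Clifford product defined in Section~3. Once that is in place, everything else reduces to direct bookkeeping with (22)--(23) and the Clifford anticommutation rules, and the four cases of the proposition are read off from the signs $\eta$ and $\epsilon$ above.
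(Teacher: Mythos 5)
Your proposal is correct and follows essentially the same route as the paper: right-multiply the discrete Dirac--K\"ahler equation by $P_{\epsilon\eta}$, commute the constant projector past $d^c+\delta^c$, and use the identities (22)--(23) to recast the result as the Hestenes equation, with the signs $\epsilon,\eta$ accounting for the four cases. The only differences are cosmetic --- you treat all four projectors uniformly via sign parameters and make explicit the locality-in-$k$ argument behind $(d^c+\delta^c)(\Omega P)=\bigl((d^c+\delta^c)\Omega\bigr)P$, which the paper compresses into the remark that $P_{++}$ is constant.
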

\begin{proof}
It suffices to prove the claim for one of projectors  (\ref{eq:24}), say for $P_{++}$.
The other cases are similar.
Multiplying Eq.~(\ref{eq:16}) from the right by the projector $P_{++}$ we obtain
\begin{equation}\label{eq:26}
i(d^c+\delta^c)\Omega P_{++}=m\Omega P_{++}.
\end{equation}
Since $P_{++}$ is constant, using (\ref{eq:22}) and (\ref{eq:23}), we have
\begin{align*}
i(d^c+\delta^c)\Omega P_{++}&=i(d^c+\delta^c)\Omega P_{+0}P_{+12}=i^2(d^c+\delta^c)\Omega P_{+0}P_{+12}e_1e_2\\
&=-(d^c+\delta^c)(\Omega P_{++})e_1e_2,
\end{align*}
\begin{equation*}
\Omega P_{++}=\Omega P_{+0}P_{+12}=\Omega P_{+12}P_{+0}e_0=\Omega P_{++}e_0.
\end{equation*}
Substituting this into (\ref{eq:26}) yields
\begin{equation*}
-(d^c+\delta^c)(\Omega P_{++})e_1e_2=m(\Omega P_{++})e_0.
\end{equation*}
 \end{proof}
Let $\overline{\Omega}$  be the complex conjugate of $\Omega$.
 Consider the real-valued forms $\Omega_+$  and $\Omega_-$ given by
\begin{equation}\label{eq:27}
\Omega_{\pm}=\pm\frac{1}{2}(\Omega+\overline{\Omega})e_0\pm\frac{i}{2}(\Omega-\overline{\Omega})e_1e_2.
\end{equation}
By (\ref{eq:22}) and (\ref{eq:23}) it is easy to check that
\begin{equation*}
\Omega P_{++}=\Omega_+P_{++},  \qquad \Omega P_{--}=\Omega_-P_{--}.
\end{equation*}
Hence, if $\Omega$ is a solution of the discrete Dirac-K\"{a}hler equation, then $\Omega_+P_{++}$ and $\Omega_-P_{--}$ are solutions of Eq. (\ref{eq:20}).
The forms $\Omega_+P_{++}$ and $\Omega_-P_{--}$ are complex-valued again.
However, if $\Omega_+P_{++}$ and $\Omega_-P_{--}$ are solutions of Eq. (\ref{eq:20}) than the real and image parts of these complex-valued forms are also solutions of Eq. (\ref{eq:20}). This is obvious since the discrete Hestenes equation is real and linear. The real and  image parts of $\Omega_+P_{++}$ are
\begin{equation*}
\RE(\Omega_+ P_{++})=\frac{1}{4}(\Omega_++\Omega_+e_0), \qquad
\IM(\Omega_+ P_{++})=\frac{1}{4}(\Omega_+e_1e_2+\Omega_+e_0e_1e_2).
\end{equation*}
Set
\begin{equation*}
\Omega_1 =\Omega_+,  \qquad \Omega_2=\Omega_+e_0, \qquad \Omega_3=\Omega_+e_1e_2, \qquad \Omega_4=\Omega_+e_0e_1e_2.
\end{equation*}
Now we take the even part of these forms. A direct  computation gives
\begin{eqnarray}\nonumber
\Omega_1^{ev}=&\frac{1}{2}(\Omega^{od}+\overline{\Omega}^{od})e_0+\frac{i}{2}(\Omega^{ev}-\overline{\Omega}^{ev})e_1e_2,
 \\ \nonumber
  \Omega_2^{ev}=&\frac{1}{2}(\Omega^{ev}+\overline{\Omega}^{ev})+\frac{i}{2}(\Omega^{od}-\overline{\Omega}^{od})e_0e_1e_2, \\
  \nonumber
  \Omega_3^{ev}=&\frac{1}{2}(\Omega^{od}+\overline{\Omega}^{od})e_0e_1e_2-\frac{i}{2}(\Omega^{ev}-\overline{\Omega}^{ev}), \\
 \Omega_4^{ev}=&\frac{1}{2}(\Omega^{ev}+\overline{\Omega}^{ev})e_1e_2-\frac{i}{2}(\Omega^{od}-\overline{\Omega}^{od})e_0,\label{eq:28}
 \end{eqnarray}
where $\Omega^{ev}$ and $\Omega^{od}$ are the even and odd parts of $\Omega=\Omega^{ev}+\Omega^{od}$.

Thus, we have proved the following
\begin{proposition}
Let $\Omega\in K(4)$ be a solution of the discrete Dirac-K\"{a}hler equation. Then  $\Omega_j^{ev}\in K^{ev}(4)$, $j=1,2,3,4$, in the form (\ref{eq:28}) are four independent solutions of the discrete Hestenes  equation (\ref{eq:20}).
\end{proposition}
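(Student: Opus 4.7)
My strategy is to exploit the two complex-valued Hestenes solutions $\Omega P_{++}$ and $\Omega P_{--}$ furnished by the previous Proposition, convert them into real-valued ones, disentangle them into the four forms $\Omega_j$, and then project onto $K^{ev}(4)$.

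I begin by recording that $\Omega P_{++}$ and $\Omega P_{--}$ both satisfy (\ref{eq:20}). Since by (\ref{eq:27}) one has $\Omega_-=-\Omega_+$ and, as already noted, $\Omega P_{++}=\Omega_+P_{++}$, $\Omega P_{--}=\Omega_-P_{--}$, an explicit expansion $P_{\pm\pm}=\tfrac14(x\pm e_0)(x\pm ie_1e_2)$ allows me to rewrite these complex solutions as
\begin{equation*}
\Omega P_{++}=\tfrac14(\Omega_1+\Omega_2+i\Omega_3+i\Omega_4),\qquad \Omega P_{--}=\tfrac14(-\Omega_1+\Omega_2+i\Omega_3-i\Omega_4).
\end{equation*}
Because $d^c+\delta^c$, right-multiplication by $e_0$ and by $e_1e_2$, and the mass $m$ are all real, Eq.~(\ref{eq:20}) is $\mathbb{R}$-linear, so the four forms $\RE(\Omega P_{\pm\pm})$ and $\IM(\Omega P_{\pm\pm})$ are each solutions, and so are their $\mathbb{R}$-linear combinations. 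Taking sums and differences of the real (resp.\ imaginary) parts then isolates $\Omega_1$ and $\Omega_2$ (resp.\ $\Omega_3$ and $\Omega_4$) individually as real solutions of (\ref{eq:20}).

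Next I observe that (\ref{eq:20}) respects the even/odd grading of $K(4)$: $d^c+\delta^c$ exchanges $K^{ev}$ and $K^{od}$, while right-multiplication by $e_0$ (odd) and $e_1e_2$ (even) acts with definite parity, so both sides of (\ref{eq:20}) decompose unambiguously into even- and odd-parity summands, and the equation decouples into two independent equations on $K^{ev}(4)$ and $K^{od}(4)$. In particular the even part of any solution is itself a solution, yielding $\Omega_j^{ev}\in K^{ev}(4)$ as Hestenes solutions. The explicit form (\ref{eq:28}) follows by substituting $\Omega=\Omega^{ev}+\Omega^{od}$ and $\overline{\Omega}=\overline{\Omega}^{ev}+\overline{\Omega}^{od}$ into (\ref{eq:27}) and collecting even-parity terms. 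Linear independence of $\Omega_1^{ev},\dots,\Omega_4^{ev}$ is then apparent from (\ref{eq:28}), since each pairs one of the four distinct real forms $\Omega^{ev}\pm\overline{\Omega}^{ev}$, $\Omega^{od}\pm\overline{\Omega}^{od}$ with a different basis factor among $\{x,\,e_0,\,e_1e_2,\,e_0e_1e_2\}$.

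The main conceptual obstacle I expect is in the first stage: a single projector $P_{++}$ alone gives, through real and imaginary parts, only the two sums $\Omega_1+\Omega_2$ and $\Omega_3+\Omega_4$, so one cannot separate $\Omega_1$ from $\Omega_2$ (nor $\Omega_3$ from $\Omega_4$) without the second projector. The key point that makes $P_{--}$ useful rather than redundant is the sign identity $\Omega_-=-\Omega_+$, which forces $\Omega P_{--}$ to be $\mathbb{R}$-linearly independent of $\Omega P_{++}$ in precisely the way needed to disentangle the four $\Omega_j$. Once that is secured, the parity decoupling and the verification of (\ref{eq:28}) are routine.
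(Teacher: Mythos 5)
Your proof is correct, and it is in fact more explicit than the paper's own argument at the one point where care is genuinely needed. The paper works only with $\Omega P_{++}=\Omega_+P_{++}$: it records that $\RE(\Omega_+P_{++})=\tfrac14(\Omega_1+\Omega_2)$ and $\IM(\Omega_+P_{++})=\tfrac14(\Omega_3+\Omega_4)$ are real solutions of (\ref{eq:20}), then immediately names $\Omega_1,\dots,\Omega_4$, takes even parts, and declares (\ref{eq:28}) to be four independent solutions; the passage from the two \emph{sums} to the four \emph{individual} forms is left implicit, and $P_{--}$ is mentioned only afterwards as yielding ``the same'' solutions. You instead bring in $\Omega P_{--}=\Omega_-P_{--}$ from the outset, use the sign identity $\Omega_-=-\Omega_+$ to expand $\Omega P_{++}=\tfrac14(\Omega_1+\Omega_2+i\Omega_3+i\Omega_4)$ and $\Omega P_{--}=\tfrac14(-\Omega_1+\Omega_2+i\Omega_3-i\Omega_4)$ as two $\mathbb{R}$-independent combinations, and isolate each $\Omega_j$ by sums and differences of real and imaginary parts before projecting to $K^{ev}(4)$ via the parity decoupling of (\ref{eq:20}). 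This buys a complete separation argument precisely where the paper is silent; the cost is that you must additionally justify $\Omega P_{--}=\Omega_-P_{--}$ and the even/odd decoupling, both of which you state correctly (right Clifford multiplication by the constant forms $e_0$ and $e_1e_2$ has definite parity, and $d^c+\delta^c$ reverses parity, so both sides of (\ref{eq:20}) split). An alternative repair, closer in spirit to the paper, is to verify directly that $\Omega_+$ itself solves (\ref{eq:20}) (writing $\Omega=A+iB$, the Dirac--K\"ahler equation gives $(d^c+\delta^c)A=mB$ and $(d^c+\delta^c)B=-mA$, from which the claim follows) and that right multiplication of a solution by $e_0$ or $e_1e_2$ again gives a solution; your route avoids that computation. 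Your closing remark on independence is at the same level of rigor as the paper's, which simply asserts it.
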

It should be noted that  taking $\Omega_-P_{--}$ instead $\Omega_+P_{++}$ we also obtain  four independent solutions of Eq. (\ref{eq:20}) in the same form (\ref{eq:28}).

\end{document}